\def\@copyrightspace{\relax}
\newtheorem{thm}{Theorem}
\newtheorem{defn}{Definition}
\newcommand{\rev}[1]{{\color{blue}#1}} 
\newcommand{\com}[1]{\textbf{\color{red} (COMMENT: #1)}} 
\newcommand{\comg}[1]{\textbf{\color{green} (COMMENT: #1)}}
\newcommand{\response}[1]{\textbf{\color{magenta} (RESPONSE: #1)}} 
\newcommand{\rev}[1]{#1}
\newcommand{\com}[1]{}
\newcommand{\comg}[1]{}
\newcommand{\response}[1]{}
\begin{document}

\title{Efficient Resource Allocation for On-Demand Mobile-Edge Cloud Computing}

\author{Xu Chen, Wenzhong Li, Sanglu Lu, Zhi Zhou, and Xiaoming Fu \thanks{

Xu Chen and Zhi Zhou are with School of Data and Computer Science, Sun Yat-sen University, Guangzhou, China. Emails: \{chenxu35,zhouzhi9\}@mail.sysu.edu.cn.

Wenzhong Li and Sanglu Lu are with State Key Laboratory for Novel Software Technology, Nanjing University, Nanjing, China. Emails: \{lwz, sanglu\}@nju.edu.cn.

Xiaoming Fu is with Institute of Computer Science, University of Goettingen, Goettingen, Germany. Email: fu@cs.uni-goettingen.de.

}

}


\maketitle
\pagestyle{empty}
\thispagestyle{empty}

\allowdisplaybreaks

\begin{abstract}
Mobile-edge cloud computing is a new paradigm to provide cloud computing capabilities at the edge of pervasive
radio access networks in close proximity to mobile users. Aiming at provisioning flexible on-demand mobile-edge cloud service, in this paper we propose a comprehensive framework consisting of a resource-efficient computation offloading mechanism for users and a joint communication and computation (JCC) resource allocation mechanism for network operator. Specifically, we first study the resource-efficient computation offloading problem for a user, in order to reduce user's resource occupation by determining its optimal communication and computation resource profile with minimum resource occupation and meanwhile satisfying the QoS constraint. We then tackle the critical problem of user admission control for JCC resource allocation, in order to properly select the  set of  users for resource demand satisfaction.  We show the admission control problem is NP-hard, and hence develop an efficient approximation solution of a low complexity by carefully designing the user ranking criteria and rigourously derive its performance guarantee.  To prevent the manipulation that some users may untruthfully report their valuations in acquiring mobile-edge cloud service, we further resort to the powerful tool of critical value approach to design truthful pricing scheme for JCC resource allocation. Extensive performance evaluation demonstrates that the proposed schemes can achieve superior performance for on-demand mobile-edge cloud computing.
\end{abstract}

\section{Introduction}

As smartphones are gaining enormous popularity, more and more new
mobile applications such as face recognition, natural language processing,
interactive gaming, and augmented reality are emerging and attract
great attention \cite{kumar2010cloud,soyata2012cloud,chen2015exploiting}. This kind of mobile applications
are typically resource-hungry,  demanding intensive computation and real-time responsiveness. Due to the physical size constraint, however,
mobile devices are in general resource-constrained, having limited
computation resources. The tension between
resource-hungry applications and resource-constrained mobile devices
hence poses a significant challenge for the future mobile platform development.


Mobile cloud computing is envisioned as a promising
approach to address such a challenge. By offloading the
computation via wireless access to the resource-rich cloud
infrastructure, mobile cloud computing can augment the capabilities
of mobile devices for resource-hungry applications. Currently, one common approach for mobile cloud computing is to offload the computation-intensive tasks to the remote public cloud infrastructure(e.g., Amazon EC2 and Windows Azure), in order to utilize the powerful computing and processing capabilities by the public clouds. As a matter of fact, the current public cloud architecture - built around static Internet-based installments of cloud resources not integrated with wireless networks - is already starting to show its limits in terms of computation-intensive mobile application support, since mobile users would experience
long latency for data exchange with the public cloud through the wide area network (WAN), which risks to become the major impediment in satisfying the real-time interactive response requirement of mobile applications.


\begin{figure}[t]
\centering
\includegraphics[scale=0.65]{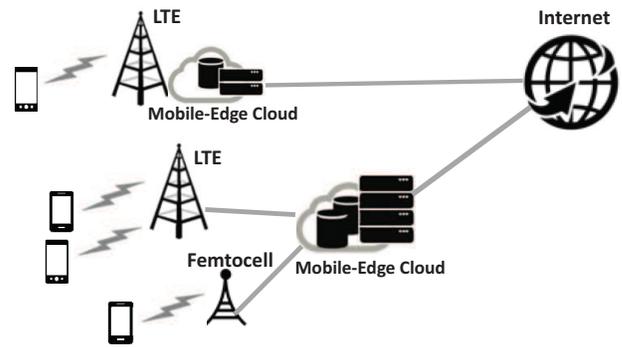}
\caption{\label{fig:An-illustration-of}An illustration of mobile-edge cloud computing }
\end{figure}

To address this challenge, a novel mobile cloud computing paradigm, called mobile-edge cloud computing, has been proposed \cite{MEC2014}. As illustrated in Figure \ref{fig:An-illustration-of}, mobile-edge cloud computing can provide cloud-computing capabilities at the edge of pervasive radio access networks (e.g., 3G, 4G, WiMax, femtocells) in close proximity to mobile users. In this case, the need for fast interactive response can be met by fast  and  low-latency connection (e.g., via fiber transmission) to resource-rich cloud computing infrastructures (called mobile-edge clouds) deployed by network operators (e.g., AT\&T and T-Mobile) within the network edge and backhaul/core networks. By endowing ubiquitous radio access networks with powerful computing capabilities, mobile-edge cloud computing is envisioned to provide pervasive and agile computation augmenting services for mobile users at anytime and anywhere \cite{MEC2014}.


In this paper, we aim at devising an efficient  mobile-edge cloud computing framework that can provide rich flexibility in meeting different mobile users' demands. To this end, in this paper we propose a comprehensive framework consisting of a resource-efficient computation offloading mechanism for the users and a joint communication and computation (JCC) resource allocation mechanism for the network operator. Specifically, we first address the resource-efficient computation offloading problem, in order to reduce a user's  resource occupancy. We then study the admission control problem and design the pricing scheme for JCC resource allocation, by jointly taking into account the objective of system-wide performance optimization as well as the practical constraints such as computational complexity for practical implementation and truthfulness for preventing manipulation.

The main results and contributions of
this paper are as follows:


$\bullet$ \emph{We address the resource-efficient computation offloading problem for each individual user subject to the QoS constraint.} Specifically, by leveraging the structural property of user's task graph, we propose an efficient delay-aware task graph partition algorithm for computation offloading. Building on this, we derive the optimal communication and computation resource demanding profile for a user that minimizes the resource occupancy and meanwhile satisfies the completion time constraint.

$\bullet$ \emph{We study the admission control problem for selecting the proper set of users in JCC resource demand satisfaction.} We show the admission control problem involving joint communication and computation resource allocation is NP-hard. We hence develop an efficient approximation solution by carefully designing the user ranking criteria, which has a low complexity to facilitate the practical implementation.  We further derive the upper bound of the performance loss of the approximate admission control solution with respect to the global optimal solution. Numerical results demonstrate that the proposed approximate solution is very efficient, with at most $15\%$ performance loss.

$\bullet$ \emph{We investigate the truthful pricing problem for preventing manipulation by untruthful valuation reporting.} We borrow the powerful tool of critical value approach in auction theory, and rigorously show that the proposed pricing scheme is truthful such that no user has the incentive to lie about its valuation for acquiring the mobile-edge cloud service to complete its computational task. This would be very useful to  prevent manipulations by untruthful valuation report, which would lead to inefficient allocation of resources and system performance degradation.

The rest of the paper is organized as follows. We first
introduce the system model
in Section \ref{system}. We then study resource-efficient computation offloading strategy for the users in Section \ref{RECO}.  We next discuss the admission control problem and pricing issue for JCC resource allocation in Sections \ref{allocation} and \ref{pricing}, respectively. We conduct the performance evaluation in Section \ref{sec:Numerical-Results}, discuss the related works in Section \ref{sec:Related-Work}, and finally conclude in Section \ref{sec:Conclusion}.

\section{System Model}\label{system}

We consider that there are a set of $K$ wireless base-stations (e.g., macrocell/femtocell
base-stations) $\mathcal{K}=\{1,2,...,K\}$ and a set of $B$ mobile-edge
clouds $\mathcal{L}=\{1,2,...,L\}$ in the system. For a base-station $k\in\mathcal{K}$,
there are $M_{k}$ orthogonal wireless subchannels dedicated for supporting
on-demand mobile-edge cloud service, and each subchannel has a bandwidth
of $w$. Moreover, a base-station $k$ is directly connected to the closest mobile-edge
cloud $l\in\mathcal{L}$ in proximity, which has a total computation
resource capacity of $B_{l}$. Note that as illustrated in Figure \ref{fig:An-illustration-of}, our model allows that multiple nearby base-stations share the same mobile-edge cloud in proximity.

A set of $N$ mobile users $\mathcal{N}=\{1,2,...,N\}$
would like to acquire the mobile-edge cloud service from the network
operator to complete their computation-intensive
tasks. In the following we denote the base-station and the mobile-edge
cloud with which a user $n$ is associated as $k(n)$ and $l(n)$,
respectively. Subject to the QoS requirement of the application, a user
$n\in\mathcal{N}$ also has a maximum allowable completion time $\mathcal{T}_{n}$
for its task, and possesses a valuation $v_{n}$, i.e., the maximum amount of monetary cost
that the user is willing to pay for acquiring the mobile-edge
cloud service to complete its task.


In the following parts, we will first study the resource-efficient computation offloading problem
for each user, in order to reduce user's resource occupation by determining its optimal communication and computation resource profile with minimum resource occupation, and meanwhile satisfying the completion time constraint. We will then design the efficient admission control mechanism for the network operator in order to select the proper set of users for JCC resource allocation subject to the resource capacity constraints. We will further develop a pricing scheme to prevent the manipulation such that some users may untruthfully report their valuations in acquiring mobile-edge cloud service, which would lead to inefficient allocation of resources and system performance degradation.

Note that for ease of exposition, in this paper we consider two most critical resource types for mobile applications, i.e., communication and computation resources. Our model can be easily extended to the cases with more resource types such as storage resource. Also, similar to many existing studies in  wireless resource auction (e.g., \cite{wen2012energy,yang2013framework,ChenToN2015}), as an initial thrust and to enable tractable analysis,  in this paper we consider a static setting such that users are stationary. The more general case that users may dynamically depart and leave the mobile-edge cloud system is very challenging and will be addressed in a future work.


\section{Resource-Efficient Computation Offloading}\label{RECO}

In this section, we consider the resource-efficient computation offloading
problem of each individual user to determine the optimal joint communication and computation
resource profile of a user for the cost-effective auction bidding, in order to minimize the resource occupancy
and meanwhile satisfy the required completion time constraint.

\subsection{Task Graph Model}

To proceed, we first introduce the task graph model to describe the processing procedure of the computational task of a user $n$. Similar to
many studies (e.g., \cite{soyata2012cloud,zhang2012offload,yang2013framework,kao2014optimizing,barbarossa2013joint,chen2018thriftyedge} and references therein), we describe the computation task profile
of a user $n$ by a directed acyclic graph $\mathcal{G}_{n}(\mathcal{V}_{n},\mathcal{E}_{n})$, with the end node as the output component to be executed
at the user's device. Figure \ref{TaskGraph} illustrates an example of the task graph of QR-code recognition application \cite{yang2013framework}. Specifically, in the task graph $\mathcal{G}_{n}(\mathcal{V}_{n},\mathcal{E}_{n})$,
the nodes represent the task components and the directed edges stand
for the data dependencies. For a node $i\in\mathcal{V}_{n}$ in the
task graph $\mathcal{G}_{n}(\mathcal{V}_{n},\mathcal{E}_{n})$,
we denote $C_{i}$ as the total computational instructions (i.e.,
CPU cycles) to execute the corresponding task component. For a subsequential
node $j\in\mathcal{V}_{n}$ that invokes the output data from node
$i$ (i.e., $(i,j)\in\mathcal{E}_{n}$), we denote $D_{ij}$ as the
amount of data that input from node $i$ to node $j$. Note that in practice many applications can be presented
by such a directed acyclic task graph model\footnote{We will consider the cyclic task graph case in a future work. One possible solution is to transform this case to the acyclic case by grouping the nodes in a cycle as a ``supper" node.} \cite{soyata2012cloud,zhang2012offload,yang2013framework,kao2014optimizing,barbarossa2013joint,chen2018thriftyedge}, e.g., face recognition, natural language processing, video streaming, virtual argumentation, data processing applications such as MapReduce.

\begin{figure}[t]
\centering
\includegraphics[scale=0.7]{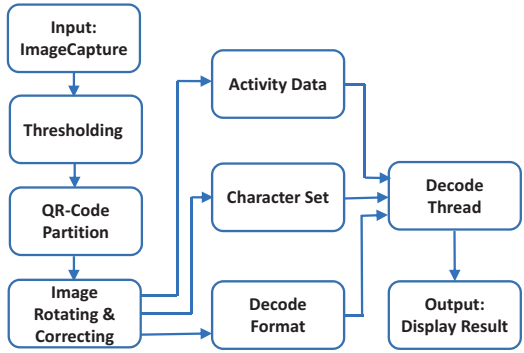}
\caption{\label{TaskGraph}Task graph of QR-code recognition application \cite{yang2013framework}}
\end{figure}

\subsection{Computation and Communication Models}

We then introduce both the communication and computation models. For
a given task component $i$ of user $n$ on its task graph (i.e., $i\in\mathcal{V}_{n}$),
the user can execute it either locally at its own device or remotely
at the mobile-edge cloud. For the local computing approach, we have
the time for executing the given task component $i$ as $T_{i,n}^{m}=\frac{C_{i}}{F_{n}^{0}}$,  where $F_{n}^{0}$ denotes the computation capability (CPU cycles
per second) of user $n$'s device. For the cloud computing approach, a virtual machine is associated to user $n$ for its task execution in the mobile-edge cloud. To provide flexibility in the mobile-edge cloud service provision, we consider
that there is a set of $S$ different virtual machine types $\mathcal{S}=\{1,2,...,S\}$
available and the computation capability of a virtual machine type
$s\in\mathcal{S}$ is denoted by $F_{s}^{c}$. Let $s_{n}$ be the
virtual machine type chosen by user $n$ for executing its whole task. Accordingly, we can then compute the
time by the cloud computing approach for executing the given task component $i$ on the virtual machine as $T_{i,n}^{c}=\frac{C_{i}}{F_{s_{n}}^{c}}$.

Now suppose that two task components $i,j\in\mathcal{V}_{n}$ with
$(i,j)\in\mathcal{E}_{n}$ are executed at different locations (one
at user's device and the other at the mobile-edge cloud). Then a total
amount $D_{ij}$ of data needs to be transferred between these two
components via wireless connection. We hence introduce the communication model for wireless
data transmission. Let $q_{n}\leq M_{k(n)}$ denote the number of
subchannels demanded by user $n$. We can then compute the data rate
for computation offloading of user $n$ as $R_{n}=q_{n}w\log_{2}\left(1+\frac{\eta_{n}h_{n,k(n)}}{\varpi_{n}}\right)$, where $\eta_{n}$ is the transmission power of user $n$ which is determined
by the base station according to some power control algorithm such
as \cite{saraydar2002efficient,barbarossa2013joint}, $h_{n,k(n)}$ is the channel gain between the user $n$
and the base station, and $\varpi_{n}$ is the background noise. Based
on the communication model above, the total data transmission time for
computation offloading between task components $i$ and $j$ is given
as $T_{ij,n}^{tx}=\frac{D_{ij}}{R_{n}}$.

\begin{algorithm}[tt]

\textbf{set} the sorted node list $\widetilde{\mathcal{V}_{n}}=\emptyset$

\textbf{set} $\mathcal{S}$ as the set of all nodes with no incoming edges

\While{$\mathcal{S}\neq \emptyset$}{
   \textbf{remove} a node $i\in \mathcal{S}$

   \textbf{add} node $i$ to the end of the list $\widetilde{\mathcal{V}_{n}}$

   \For{each node $j$ such that $(i,j)\in\mathcal{E}_{n}$}{
        \textbf{remove} edge $(i,j)$ from the task graph $\mathcal{G}_{n}(\mathcal{V}_{n},\mathcal{E}_{n})$
        \If{node $j$ has no other incoming edges}{
             \textbf{add} node $i$ to the set $\mathcal{S}$
        }
   }
}

\textbf{return} the list $\widetilde{\mathcal{V}_{n}}$
\caption{\label{alg:TS}Topology Sorting of Task Graph}
\end{algorithm}

\subsection{Delay-Aware Task Graph Partition\label{sub:2.3}}

We next consider the delay-aware task graph partition problem for a fixed resource allocation. That is, given that a user $n$'s subchannel demand
$q_{n}$ for wireless communication and the virtual machine type selection $s_{n}$
for cloud computation are fixed, we would like to determine the optimal execution
location (i.e., device or cloud) for each task component in its task graph, in order
to minimize the total delay of executing the whole task.

To address the delay-aware task graph partition problem, we leverage
the structural property of the task graph $\mathcal{G}_{n}(\mathcal{V}_{n},\mathcal{E}_{n})$.
Since the graph $\mathcal{G}_{n}(\mathcal{V}_{n},\mathcal{E}_{n})$
is directed acyclic, it is known that we can carry out a topology
sorting of the graph, i.e., ordering of the nodes such that for every
directed edge $(i,j)\in\mathcal{E}_{n}$ from node $i$ to node $j$,
$i$ comes before $j$ in the ordering \cite{cormen2009introduction}.

We introduce the classic Kahn Algorithm \cite{cormen2009introduction} for the topology
sorting of the task graph in Algorithm \ref{alg:TS}, which has a computational complexity of $\mathcal{O}(|\mathcal{V}_{n}|+|\mathcal{E}_{n}|)$
(please refer to \cite{cormen2009introduction} for the details). After the topology sorting,
we then obtain an ordering of the nodes as $\widetilde{\mathcal{V}_{n}}=(a_{1},a_{2},...,a_{|\mathcal{V}_{n}|})$
such that $a_{1}<a_{2}<...<a_{|\mathcal{V}_{n}|}$. We also denote
the set of nodes to which there is an edge directed from node $i$
as $\Delta_{n}(i)=\{j:(i,j)\in\mathcal{E}_{n}\}$. According
to the definition of topology sorting, we know that if $j\in\Delta_{n}(i)$,
then $i<j$ in the ordering $\widetilde{\mathcal{V}_{n}}$. Note that
since no edge is directed from the end node (i.e., the output component)
and for any other node there always exists some edge directed from
it (otherwise the corresponding task component is useless since it does
not provide any input to either other intermediate components or the
output component), the last node $a_{|\mathcal{V}_{n}|}$ in the ordering
$\widetilde{\mathcal{V}_{n}}$ must be the output component.

Based on the node ordering by topology sorting, we then solve the
delay-aware task graph partition problem by the principle of backward
induction \cite{cormen2009introduction}. That is, we first determine the optimal execution location
for the last node in the ordering $\widetilde{\mathcal{V}_{n}}$,
and based on which we can determine the optimal location for the nodes
by moving backward. Specifically, we denote the execution location
of the task component $i$ as $y_{i,n}\in\{m,c\}$, with $y_{i,n}=c$
denoting the component will be executed at the cloud and $y_{i,n}=m$
denoting the component will be executed at the device. Moreover, we
denote the minimum delay starting from executing the task component
$i$ till the end of the whole task as $Z_{n}(i)$.


\begin{algorithm}[tt]

\textbf{conduct} the topology sorting of $\mathcal{G}_{n}(\mathcal{V}_{n},\mathcal{E}_{n})$ and obtain the sorted node list $\widetilde{\mathcal{V}_{n}}$

\textbf{set} $Z_{n}(a_{|\mathcal{V}_{n}|})=T_{a_{|\mathcal{V}_{n}|},n}^{m}$ and $y_{a_{|\mathcal{V}_{n}|},n}^{*}=m$ for the end node $a_{|\mathcal{V}_{n}|}$

\For{$i=|\mathcal{V}_{n}|-1,...,1$}{
   \textbf{compute} $Z_{n}(a_{i})$ and $y_{a_{i},n}^{*}$ according to (\ref{eq:OF1}) and (\ref{eq:OF2}), respectively
}
\textbf{return} $\{Z_{n}(a_{i})\}_{a_{i}\in\mathcal{V}_{n}}$ and $\{y_{a_{i},n}^{*}\}_{a_{i}\in\mathcal{V}_{n}}$
\caption{\label{alg:Delay-Optimal-Task-Graph}Delay-Aware Task Graph Partition}
\end{algorithm}

For the last node $a_{|\mathcal{V}_{n}|}$ (i.e., the output component)
in the ordering $\widetilde{\mathcal{V}_{n}}$, we have $Z_{n}(a_{|\mathcal{V}_{n}|})=T_{a_{|\mathcal{V}_{n}|},n}^{m}$
and the optimal execution location is $y_{a_{|\mathcal{V}_{n}|},n}^{*}=m$,
due to the fact that the final result will be returned to the user's
device. Then we move backward to consider the remaining nodes $a_{i}<a_{|\mathcal{V}_{n}|}$
in the ordering $\widetilde{\mathcal{V}_{n}}$ in a recursive manner.
According to the definition of $Z_{n}(a_{i})$, we know that the
delay starting from executing the task component $a_{i}$ till the
end of the whole task depends on that of the bottleneck node $j\in\Delta_{n}(a_{i})$
with maximum delay that requires the input data from node $i$. Thus,
we have the following:
\begin{equation}
Z_{n}(a_{i})=\min_{y\in\{m,c\}}\max_{j\in\Delta_{n}(a_{i})}\{T_{a_{i},n}^{y}+T_{a_{i}j,n}^{tx}\boldsymbol{1}_{\{y_{j,n}^{*}\ne y\}}+Z_{n}(j)\},\label{eq:OF1}
\end{equation}
 and
\begin{equation}
y_{a_{i},n}^{*}=\arg\min_{y\in\{m,c\}}\max_{j\in\Delta_{n}(a_{i})}\{T_{a_{i},n}^{y}+T_{a_{i}j,n}^{tx}\boldsymbol{1}_{\{y_{j,n}^{*}\ne y\}}+Z_{n}(j)\},\label{eq:OF2}
\end{equation}
where $\boldsymbol{1}_{\{A\}}$ is an indicator function such that
$\boldsymbol{1}_{\{A\}}=1$ if the event $A$ is true and $\boldsymbol{1}_{\{A\}}=0$
otherwise. Here $\boldsymbol{1}_{\{y_{j,n}^{*}\ne y\}}=1$ implies
the wireless data transmission is required since components $i$ and
$j$ are executed at different locations.

By the back induction, for all the remaining nodes $a_{i}<a_{|\mathcal{V}_{n}|}$, we can recursively compute the values of $Z_{n}(a_{i})$ and the optimal execution locations $y_{a_{i},n}^{*}$ according to (\ref{eq:OF1}) and (\ref{eq:OF2}), respectively. We describe the procedure for the delay-aware task graph partition
in Algorithm \ref{alg:Delay-Optimal-Task-Graph}. The complexity of
the backward induction is $\mathcal{O}(|\mathcal{V}_{n}|)$ and the
complexity of topology sorting is $\mathcal{O}(|\mathcal{V}_{n}|+|\mathcal{E}_{n}|)$.
We hence know then complexity of the delay-aware task graph partition
algorithm is at most $\mathcal{O}(|\mathcal{V}_{n}|^{2})$ since $|\mathcal{E}_{n}|\le|\mathcal{V}_{n}|^{2}$.

\subsection{Resource-Efficient Demanding Strategy\label{sub:Resource-Efficient-Computation-O}}

In Section \ref{sub:2.3}, we have solved the delay-aware task graph partition
problem given the fixed subchannel demand $q_{n}$ and virtual machine type selection $s_{n}$ of a user
$n$. Building on this, we then determine the resource-efficient demanding strategy (i.e., the optimal subchannel
demand $q_{n}$ and virtual machine type selection $s_{n}$) of user $n$ to minimize
the resource occupancy, while guaranteeing its task can be completed
within the completion time deadline $\mathcal{T}_{n}$.

First of all, we define the resource
occupancy function for a user $n$ as the sum of the normalized occupancy
ratios of the user for both communication and computation resources,
i.e.,
\begin{equation}
\Phi_{n}(q_{n},s_{n})=\frac{q_{n}}{M_{k(n)}}+\frac{F_{s_{n}}^{c}}{B_{l(n)}}.\label{ROF}
\end{equation}
Recall that $M_{k(n)}$ and $B_{l(n)}$ are the total number of available
subchannels and computation capacity at user n's associated base-station
and mobile-edge cloud, respectively. The key motivation of minimizing the resource occupancy is two-fold. On one hand, in the operator's admission control mechanism design for later,
the resource occupancy serves as an important indicator for accepting
a user's request and determining its payment. Thus, from the user's point of view, a smaller
resource occupancy implies a higher chance of getting admitted into
the mobile-edge cloud service and a lower service payment as well.  On the other hand, from
the perspective of the network operator, it helps to increase the resource
utilization efficiency and support more users in the mobile-edge cloud
service.

Then, the resource-efficient demanding strategy can be obtained by solving the following optimization problem:
\begin{eqnarray}
 & \min_{q_{n},s_{n}} & \Phi_{n}(q_{n},s_{n}) \label{obj1} \\
 & \mbox{s.t.} & Z_{n}(i)\leq\mathcal{T}_{n},\forall i\in\mathcal{V}_{n}.\label{constraint1}
\end{eqnarray}
The objective is to minimize the resource occupancy, and the constraint
guarantees that the whole task (i.e., all the task components) will
be completed within the deadline $\mathcal{T}_{n}$. \rev{Building upon
the delay-aware task graph partition algorithm in Algorithm \ref{alg:Delay-Optimal-Task-Graph},
this problem can be solved as follows: first, we rank all the communication and computation resource profiles $(q_{n},s_{n})$ of a user according to the corresponding resource
occupancy function value $\Phi_{n}(q_{n},s_{n})$ (with ties randomly broken). Then, according to the ranking, we sequentially select a resource profile $(q_{n},s_{n})$ and utilize Algorithm \ref{alg:Delay-Optimal-Task-Graph} to compute the
value of $Z_{n}(i)$ for any $i\in\mathcal{V}_{n}$. Once the constraint in (\ref{constraint1}) is satisfied, then we stop and find
the optimal resource profile $(q^{*}_{n},s^{*}_{n})$ having the minimum value of $\Phi_{n}(q_{n},s_{n})$. The key idea is that, by the backward induction  based on (\ref{eq:OF1}) and (\ref{eq:OF2}), Algorithm \ref{alg:Delay-Optimal-Task-Graph} finds the optimal policy of the task graph partition that minimizes the total executing delay for given a fixed communication and computation resource allocation profile $(q_{n},s_{n})$. So, the optimal resource allocation profile for the delay constrained optimization problem in (\ref{obj1}) can be obtained by exhaustive searching over all the feasible resource allocation profiles. To do it in an efficient way, here we first rank the objective function value $\Phi_{n}(q_{n},s_{n})$ (\ref{ROF}) for all the feasible resource allocation profiles in an increasing manner, and then sequentially search for the optimal resource profile with the minimum value $\Phi_{n}(q_{n},s_{n})$ and meanwhile the delay constraint in (\ref{constraint1}) being satisfied.}

Note that the resource-efficient demanding strategy can be locally computed by each user based on its individual application information. Since there are at most $M_{k(n)}$ subchannels and $S$ different
virtual machine types, in the worst case a user $n$ only has to check at
most $M_{k(n)}S$ resource profiles, then the total complexity of computing the resource-efficient demanding strategy in (\ref{obj1}) is at most $\mathcal{O}(M_{k(n)}S|\mathcal{V}_{n}|^{2})$, which is computationally feasible.


\section{Admission Control for JCC Resource Allocation}\label{allocation}
\rev{After receiving users' resource demanding profiles, a key challenge would be that the total available communication and computation resources are limited in practice and the network operator may not satisfy all resource demands of the users. We hence have to design a user admission control mechanism for the network operator to select the proper set of requesting users to serve, in order to optimize the system-wide performance subject to both communication and computation capacity constraints.}

\subsection{Problem Formulation}

We first formally state the admission control problem for JCC resource
allocation as follows.
\begin{defn}
(Admission Control Problem) Given the resource demanding profiles of all the users,
the operator would like to solve the following admission control
problem:
\begin{eqnarray}
 & \max_{\mathcal{C}\subseteq\mathcal{N}} & \sum_{n\in\mathcal{C}}v_{n}\label{eq:WDP1}\\
 & \mbox{s.t.} & \sum_{n:n\in\mathcal{C},k(n)=i}q_{n}^{*}\leq M_{i},\forall i\in\mathcal{K},\label{eq:WDP2}\\
 &  & \sum_{n:n\in\mathcal{C},l(n)=j}F_{s_{n}^{*}}^{c}\leq B_{j},\forall j\in\mathcal{L}.\label{eq:WDP3}
\end{eqnarray}
\end{defn}
\rev{The objective in (\ref{eq:WDP1}) of the admission control problem
is to select the proper set of users $\mathcal{C}$ to serve, in order to maximize the social welfare (i.e., system-wide performance) which is the sum of the accepted users' valuations.} In other words, we would like to efficiently allocate the resources to the set of users that value those resources most. The first constraint in (\ref{eq:WDP2})
means that for a base-station $i$, the total requested communication
resources (i.e., subchannels) by the selected users should not exceed its capacity. The second constraint in (\ref{eq:WDP3}) represents
that for a mobile-edge cloud $j$, the total requested computation
resources by the selected users should not exceed its total available computation
capacity. In this paper we aim at designing efficient resource allocation scheme for optimizing the system-wide performance (i.e., social welfare maximization), and the
case of maximizing the revenue of the operator will be considered in a future work. Note that since the valuation of a user characterizes its willingness to pay, maximizing the sum of the valuations would also have a positive effect on increasing
the revenue obtained by the operator.

\begin{defn}
(Computational Efficiency) A mechanism is computationally
efficient if the computing procedure of the mechanism terminates in
polynomial time.
\end{defn}
The computational efficiency of a devised algorithm for JCC resource allocation is very important
in practice. Any optimal algorithm with high complexity would require a large amount of time overhead to compute the optimal solution, which leads to high latency and slow responsiveness and hence is useless in reality.

\subsection{Computational Complexity of Admission Control Problem }

We first consider the computational complexity of the admission control problem. As mentioned above,
to make the JCC resource allocation useful in practice, we need to develop
a computationally efficient algorithm for the admission control problem.
Unfortunately, we show that it is NP-hard to compute an optimal solution
for the admission control problem.
\begin{thm}\label{thm:NP}
The admission control problem for the JCC resource allocation is NP-hard.\end{thm}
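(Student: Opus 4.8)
The plan is to prove NP-hardness by \emph{restriction}: I will exhibit a family of instances of the admission control problem (\ref{eq:WDP1})--(\ref{eq:WDP3}) that exactly encode the classical, NP-hard 0-1 Knapsack problem \cite{cormen2009introduction}. Since any polynomial-time algorithm for the general problem would in particular solve every instance in this subfamily, NP-hardness of the restricted instances transfers immediately to the problem as a whole.

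Concretely, I would start from a 0-1 Knapsack instance consisting of $N$ items with nonnegative integer weights $w_1,\dots,w_N$, values $v_1,\dots,v_N$, capacity $W$, and target $V$, where the question is whether some subset has total weight at most $W$ and total value at least $V$; items heavier than $W$ are discarded first, since they can never be selected, so that $w_n\le W$ for all $n$. I then build an admission control instance with a single base-station ($K=1$) and a single mobile-edge cloud ($L=1$), and $N$ users all associated with them, i.e. $k(n)=l(n)=1$. I set the subchannel demand $q_n^{*}=w_n$, the valuation of user $n$ equal to $v_n$, and the communication capacity $M_1=W$ (so that $q_n^{*}\le M_{k(n)}$ holds). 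Finally, to make the computation constraint (\ref{eq:WDP3}) inactive, I choose the cloud capacity $B_1=\sum_{n}F_{s_{n}^{*}}^{c}$, which is satisfied by every $\mathcal{C}$. Under this map, a set $\mathcal{C}$ is feasible iff $\sum_{n\in\mathcal{C}}q_n^{*}\le M_1$, i.e. $\sum_{n\in\mathcal{C}}w_n\le W$, and its objective $\sum_{n\in\mathcal{C}}v_n$ equals the Knapsack value; hence the optimum of (\ref{eq:WDP1}) is at least $V$ precisely when the Knapsack instance is a YES-instance. The construction is obviously polynomial in the input size, which completes the reduction.

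The reduction itself is short, so the step I would treat as the real obstacle is the \emph{modeling} justification that the quantities $q_n^{*}$, $F_{s_{n}^{*}}^{c}$, and $v_n$ may be assigned the arbitrary values the reduction requires. The cleanest route is to observe that the admission control problem, as formulated, takes the resource demanding profiles $(q_n^{*},F_{s_{n}^{*}}^{c})$ and valuations $v_n$ as given inputs, so they can be set freely; if one instead insists that these profiles be produced by the offloading stage, I would exhibit, for any prescribed $(q_n^{*},F_{s_{n}^{*}}^{c})$, a choice of task graph, deadline $\mathcal{T}_n$, and channel parameters as in Section \ref{RECO} that realizes it, ensuring the hard instances are genuinely attainable within the model. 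With either justification, the chain ``polynomial-time admission control $\Rightarrow$ polynomial-time 0-1 Knapsack'' holds, and the claimed NP-hardness follows.
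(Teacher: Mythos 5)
Your reduction is correct, but it takes a different route from the paper. The paper also restricts to a single base-station and a single mobile-edge cloud, but then keeps \emph{both} capacity constraints active and identifies the resulting special case as the two-dimensional knapsack problem, invoking the fact that this problem is \emph{strongly} NP-hard \cite{caprara2004two}. You instead deactivate the computation constraint (\ref{eq:WDP3}) by inflating $B_1$ and reduce from the classical one-dimensional 0-1 Knapsack. Both arguments are restriction arguments and both establish the theorem as stated, and yours has the virtue of being self-contained and of explicitly addressing the modeling point the paper glosses over (whether arbitrary profiles $(q_n^*, F_{s_n^*}^c, v_n)$ are legitimate inputs --- they are, since the admission control problem takes them as given). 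The trade-off is that 0-1 Knapsack is only weakly NP-hard, so your reduction establishes weak NP-hardness and leaves open the possibility of a pseudo-polynomial algorithm or an FPTAS; the paper's appeal to the strong NP-hardness of two-dimensional knapsack rules these out, which is a meaningfully stronger conclusion even though the theorem statement does not advertise it. If you want to match the paper's (implicit) strength, keep both constraints in play and reduce from a strongly NP-hard problem as the paper does.
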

\begin{proof}
We consider the special case with only one base station having $M$
subchannels available and one mobile-edge cloud having $B$ computation
capacity. In this case, the admission control problem becomes

\begin{eqnarray}
 & \max_{\mathcal{C}\subseteq\mathcal{N}} & \sum_{n\in\mathcal{C}}v_{n}\label{eq:WDP1-1}\\
 & \mbox{s.t.} & \sum_{n:n\in\mathcal{C}}q_{n}^{*}\leq M,\label{eq:WDP2-1}\\
 &  & \sum_{n:n\in\mathcal{C}}F_{s_{n}^{*}}^{c}\leq B.\label{eq:WDP3-1}
\end{eqnarray}
Actually, the problem above is a two-dimensional knapsack problem,
by regarding the communication and computation resources as two deimensions.
It is known that the two-dimensional knapsack problem is strongly
NP-hard \cite{caprara2004two}. Thus, as a generalized extension of the two-dimensional
knapsack problem, the admission control problem in our case is also NP-hard.
\end{proof}

The NP-hardness of the admission control problem implies that we have to develop approximation algorithms for designing a fast admission control mechanism.

\subsection{Approximate Admission Control Algorithm}
We then propose an approximate approach to solve the admission control
problem. The key idea is to first
design the proper ranking metric to order the demanding profiles of all the users,
and then select the set of accepted users in a sequential manner.

Specifically, given the set of demanding profiles by all users $\{\varphi_{n}=<(q_{n}^{*},s_{n}^{*}),v_{n}>\}_{n\in\mathcal{N}}$,
the operator first orders the bids according to the following performance
metric (with tie randomly broken):
\begin{equation}
\gamma_{n}=\frac{v_{n}}{\Phi_{n}(q_{n}^{*},s_{n}^{*})}.\label{eq:pr}
\end{equation}
Recall that $v_{n}$ is user's valuation and $\Phi_{n}(q_{n}^{*},s_{n}^{*})$ is the resource occupancy function in (\ref{ROF}).
The physical meaning of
the performance metric $\gamma_{n}$ indicates the user's valuation
per unit resource occupancy. Intuitively, if a user has
a higher valuation and consumes fewer resources, then the user should
have a higher chance to get accepted in the JCC resource allocation.

After ranking all users, the operator then sequentially selects the set
of accepted users one by one. For a candidate user $n$,
the operator checks, by selecting this user as an accepted user, whether
the total requested communication and computation resources exceed the capacity of its associated base-station and the mobile-edge cloud.
If the capacity constraint is not violated, then the operator will accept this user as an accepted user; otherwise,
the operator will reject this user. We summarize the whole procedure
for determining the accepted users in Algorithm \ref{alg:Winner-Determination-Algorithm}.

\begin{algorithm}[tt]

\textbf{initialize} $x_{1}=...=x_{N}=0$

\For{$n\in\mathcal{N}$}{
    \textbf{compute} $\gamma_{n}=\frac{\lambda_{n}}{\Phi_{n}(q_{n}^{*},s_{n}^{*})}$
}

\textbf{sort} all the users in $\mathcal{N}$ in the decreasing order of $\gamma_{n}$

\For{$n=1,2,...,N$ in the sorted set $\mathcal{N}$}{
    \If{$\sum_{i<n:k(i)=k(n)}q_{i}^{*}x_{i}+q_{n}^{*}\leq M_{k(n)}$ and $\sum_{i<n:k(i)=k(n)}F_{s_{i}^{*}}^{c}x_{i}+F_{s_{n}^{*}}^{c}\leq B_{l(n)}$}{
        \textbf{set} $x_{n}=1$
    }
}
\textbf{return} $\{x_{n}\}_{n\in\mathcal{N}}$
\caption{\label{alg:Winner-Determination-Algorithm} Admission Control for JCC Resource Allocation}
\end{algorithm}

%

\subsection{Performance Analysis}

For the approximate admission control in Algorithm \ref{alg:Winner-Determination-Algorithm},
it has the complexity of $\mathcal{O}(N)$ to compute the ranking
metrics of all users, $\mathcal{O}(N\log N)$ to sort the users, $\mathcal{O}(N)$
to select the accepted users. Thus the complexity of Algorithm \ref{alg:Winner-Determination-Algorithm}
is $\mathcal{O}(N\log N)$. Hence, the approximate admission control is computationally efficient for JCC resource allocation.

We next analyze the worst-case performance of the proposed admission control
algorithm, with respect to the optimal solution that maximizes the
social welfare in (\ref{eq:WDP1}). Let $\mathcal{C}$ and $\mathcal{C}^{*}$
denote the set of accepted users by the admission control algorithm
in Algorithm \ref{alg:Winner-Determination-Algorithm} and the optimal
solution in (\ref{eq:WDP1}), respectively. We also denote the social welfare
by the admission control algorithm and the optimal solution as
$W=\sum_{n\in\mathcal{C}}v_{n}$ and $W^{*}=\sum_{n\in\mathcal{C}^{*}}v_{n}$,
respectively. We define the approximation ratio $0\leq\rho\leq1$
of the admission control algorithm such that $W\geq\rho W^{*}$.

We first consider the special case that the total communication resource reserved for mobile-edge cloud service at each base station is homogeneous and each mobile user demands the same communication resource amount, i.e., $M_{k}=M$ for any $k\in\mathcal{K}$ and $q^{*}_{n}=q$ for any $n\in\mathcal{N}$. This could correspond to the situation that all the base-stations reserve the same communication resource for mobile-edge could computing service and each user is allowed to utilize one channel for computation offloading in practice. For this special case, we have the following result.
\begin{thm}\label{thm:AR2}
For the case that $M_{k}=M$ for any base-station $k\in\mathcal{K}$ and $q^{*}_{n}=q$ for any user $n\in\mathcal{N}$, the admission control algorithm in JCC resource allocation can approximate the optimal solution with a ratio of $\rho=\frac{1}{2}$. \end{thm}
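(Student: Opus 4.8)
The plan is to bound the optimal social welfare $W^{*}$ by twice the greedy welfare $W$ via a charging (exchange) argument built on the greedy ranking by $\gamma_{n}=v_{n}/\Phi_{n}(q_{n}^{*},s_{n}^{*})$ in (\ref{eq:pr}). First I would split the optimal set as $\mathcal{C}^{*}=(\mathcal{C}^{*}\cap\mathcal{C})\cup(\mathcal{C}^{*}\setminus\mathcal{C})$. The intersection contributes $\sum_{n\in\mathcal{C}^{*}\cap\mathcal{C}}v_{n}\le W$ for free, so it suffices to prove the single key inequality $\sum_{n\in\mathcal{C}^{*}\setminus\mathcal{C}}v_{n}\le W$. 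Adding the two bounds then gives $W^{*}\le 2W$, i.e. $W\ge\tfrac{1}{2}W^{*}$, which is exactly $\rho=\tfrac{1}{2}$.

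To establish the key inequality I would examine, for each optimal-but-rejected user $n\in\mathcal{C}^{*}\setminus\mathcal{C}$, the moment Algorithm~\ref{alg:Winner-Determination-Algorithm} considered $n$ and rejected it. Rejection means adding $n$ violated either the communication constraint (\ref{eq:WDP2}) at base-station $k(n)$ or the computation constraint (\ref{eq:WDP3}) at cloud $l(n)$, so the offending capacity is already consumed by previously accepted users, all of which are ranked ahead of $n$ and hence satisfy $\gamma_{i}\ge\gamma_{n}$. Here the special-case hypotheses $M_{k}=M$ and $q_{n}^{*}=q$ are essential: the communication constraint collapses to a cardinality (partition-matroid) constraint admitting at most $\kappa=\lfloor M/q\rfloor$ users per base-station, and the communication term $q/M$ of $\Phi_{n}$ in (\ref{ROF}) becomes a common constant across all users, which is what allows the $\gamma$-ranking to be compared cleanly against the raw valuations. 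The goal is then to construct an injection $\phi:\mathcal{C}^{*}\setminus\mathcal{C}\to\mathcal{C}$ sending each rejected user to a distinct blocking accepted user with $v_{\phi(n)}\ge v_{n}$, so that $\sum_{n\in\mathcal{C}^{*}\setminus\mathcal{C}}v_{n}\le\sum_{n\in\mathcal{C}^{*}\setminus\mathcal{C}}v_{\phi(n)}\le\sum_{m\in\mathcal{C}}v_{m}=W$.

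The hard part will be making the charging simultaneously injective and value-nonincreasing across both resource dimensions at once. Because $\gamma_{n}$ ranks by the \emph{combined} occupancy $\Phi_{n}=q/M+F_{s_{n}^{*}}^{c}/B_{l(n)}$ rather than by valuation, a blocking accepted user $i$ only yields $v_{i}\ge v_{n}\,\Phi_{i}/\Phi_{n}$, which does not by itself give $v_{i}\ge v_{n}$ when the blocker is computation-light. I would handle this by treating the two resources separately: for users rejected on the communication (cardinality) constraint, the per-station partition-matroid structure lets me match each rejected optimal user to one of the $\kappa$ higher-ranked accepted users at the same station through a Hall-type / augmenting-path matching that guarantees injectivity; for users rejected on the computation constraint, I would aggregate the normalized computation occupancies at each cloud and charge value to the accepted users that saturate that cloud. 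Verifying that these two partial charges merge so that no accepted user is charged more than once — pinning the constant at exactly $\tfrac{1}{2}$ rather than degrading to $\tfrac{1}{3}$ — is the delicate step, and it is precisely where the homogeneity of the communication resource must be invoked to prevent a rejected user from being blocked, and charged, on both dimensions at the same time.
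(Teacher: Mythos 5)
Your decomposition $W^{*}\le\sum_{n\in\mathcal{C}^{*}\cap\mathcal{C}}v_{n}+\sum_{n\in\mathcal{C}^{*}\setminus\mathcal{C}}v_{n}$ is fine, but the key injection you need --- $\phi:\mathcal{C}^{*}\setminus\mathcal{C}\to\mathcal{C}$ with $v_{\phi(n)}\ge v_{n}$ --- does not exist under the stated hypotheses, and this is not a technicality that a Hall-type matching can repair. The special-case assumptions $M_{k}=M$ and $q_{n}^{*}=q$ homogenize only the \emph{communication} term $q/M$ of $\Phi_{n}$; the computation term $F_{s_{n}^{*}}^{c}/B_{l(n)}$ remains heterogeneous across users, so a user $i$ that blocks $n$ on the computation constraint can satisfy $\gamma_{i}\ge\gamma_{n}$ while having $\Phi_{i}\ll\Phi_{n}$ and hence $v_{i}\ll v_{n}$. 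Concretely, take one station with many free subchannels and one cloud of capacity $B$: a user $A$ with $v_{A}=1$ and a tiny VM ($F_{A}^{c}/B=10^{-3}$, so $\Phi_{A}\approx q/M$) outranks a user $C$ with $v_{C}=10$ and $F_{C}^{c}/B=1$ whenever $q/M$ is small; greedy accepts $A$, which then blocks $C$ on computation, and $A$ is the \emph{only} accepted user at that cloud, so every candidate target for $\phi(C)$ has value $1<10$. Your charging scheme therefore cannot produce $\sum_{\mathcal{C}^{*}\setminus\mathcal{C}}v_{n}\le W$; the per-user bound $v_{i}\ge v_{n}\Phi_{i}/\Phi_{n}$ is all the ranking gives you, and aggregating it only reproduces the occupancy-dependent ratio of the general-case Theorem~\ref{thm:AR}, not the constant $\tfrac{1}{2}$. (The communication side of your split is the only part that genuinely works: there the constraint is a per-station cardinality bound, both $\mathcal{C}$ and $\mathcal{C}^{*}$ admit at most $M/q$ users per station, and an injection exists --- though even there the value comparison still leans on the same problematic $\Phi$-ratio.)

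The paper takes an entirely different route that sidesteps the injection. It decomposes the problem per mobile-edge cloud, relaxes the integrality of $x_{n}$ to obtain a fractional program as in (\ref{eq:opt444}), and argues that the greedy ranking solves this \emph{fractional} knapsack exactly (using that the common $q/M$ term makes the $\gamma$-ranking act like ranking by value per unit computation). The fractional optimum then has the form $x_{1}=\dots=x_{d}=1$, $x_{d+1}=\alpha$, and the chain $W_{OPT}\le W_{F}\le W_{G}+\alpha v_{d+1}\le 2W_{G}$ delivers the factor $2$: the entire loss is confined to the single fractionally-accepted item, so no user-by-user value comparison is ever needed. If you want to salvage a combinatorial argument, you would have to charge the \emph{whole} rejected set at a cloud against the whole accepted set at that cloud (comparing sums of occupancies, which are both bounded by the capacity) rather than matching users one-to-one; that is essentially the LP argument in disguise. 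Note also that the paper's final step implicitly needs $\alpha v_{d+1}\le W_{G}$, i.e.\ that the first fractionally-cut item is not worth more than the accepted prefix --- the standard caveat of density-greedy for knapsack --- so if you pursue either route you should flag where that assumption enters.
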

\begin{proof}
For this special case, the admission control problem can
be reformulated as
\begin{eqnarray*}
 & \max_{x_{n}} & \sum_{n=1}^{N}v_{n}x_{n}\\
 & s.t. & \sum_{n:k(n)=i}x_{n}\le\frac{M}{q},\forall i\in\mathcal{K},\\
 &  & \sum_{n:l(n)=j}F_{s_{n}^{*}}^{c}\leq B_{j},\text{\ensuremath{\forall}j\ensuremath{\in\mathcal{L}}},\\
 &  & x_{n}\in\{0,1\},\forall n\in\mathcal{N}.
\end{eqnarray*}

Since a base-station is connected to only one mobile-edge cloud, in
this case we can decompose the problem above by only considering the
sub-problem for each mobile-edge cloud as follows. Here we denote
the set of base-stations and users connecting to base-station $j$
as $\mathcal{K}_{j}$ and $\mathcal{N}_{j}$, respectively.
\begin{eqnarray}
 & \max_{x_{n}} & \sum_{n\in\mathcal{N}_{j}}v_{n}x_{n}\label{eq:opt333}\\
 & s.t. & \sum_{n:k(n)=i}x_{n}\le\frac{M}{q},\forall i\in\mathcal{K}_{j},\nonumber \\
 &  & \sum_{n:l(n)=j}F_{s_{n}^{*}}^{c}\leq B_{j},\nonumber \\
 &  & x_{n}\in\{0,1\},\forall n\in\mathcal{N}_{j}.\nonumber
\end{eqnarray}

Now suppose we relax the problem above by allowing the variable $x_{n}$
taking fractional value, i.e., $x_{n}\in[0,1]$ as follows.
\begin{eqnarray}
 & \max_{x_{n}} & \sum_{n\in\mathcal{N}_{j}}v_{n}x_{n}\label{eq:opt444}\\
 & s.t. & \sum_{n:k(n)=i}x_{n}\le\frac{M}{q},\forall i\in\mathcal{K}_{j},\label{eq:c111}\\
 &  & \sum_{n:l(n)=j}F_{s_{n}^{*}}^{c}x_{n}\leq B_{j},\label{eq:c222}\\
 &  & 0\leq x_{n}\leq1,\forall n\in\mathcal{N}_{j}.\nonumber
\end{eqnarray}
For the relaxed problem above, we can easily solve it by the greedy
manner using $r_{n}=\frac{v_{n}}{F_{s_{n}^{*}}^{c}}$ as the ranking
metric. Specifically, we first rank all the users in the set $\mathcal{N}_{j}$
accordingly to the ranking metric $r_{n}=\frac{v_{n}}{F_{s_{n}^{*}}^{c}}$
and then sequentially add a user into the selected set (i.e., $x_{n}=1$)
if both conditions (\ref{eq:c111}) and (\ref{eq:c222}) are satisfied.
When a user violates condition (\ref{eq:c111}), we set $x_{n}=0$.
When a user who first violates (\ref{eq:c222}), we set $x_{n}=\alpha\in[0,1)$
as a fractional value to ensures the condition (\ref{eq:c111}) is
satisfied and the algorithm ends. We can prove that this greedy algorithm
can find the optimal solution to problem (\ref{eq:opt444}) by contradiction.
This is because, if there exists a user $n$ of a higher ranking metric
$r_{n}$ that is selected by the greedy algorithm but not included
in the final solution, we can swap it with another user $m$ of lower
ranking metric $r_{m}$ that is not picked by the greedy algorithm
but is included in the solution to improve the solution. In particular,
if there exists such a user $m$ that belongs to the same base-station
with user $n$, then we swap user $n$ with user $m$ in the solution
and the conditions (\ref{eq:c111}) and (\ref{eq:c222}) are still
satisfied. If not, we we swap user $n$ with user $m$ from another
base-station. Since in the greedy algorithm user $n$ is selected
and the conditions (\ref{eq:c111}) and (\ref{eq:c222}) are satisfied,
then after the swap the the conditions are still satisfied. As a result,
we see that the solution can be improved and this contradicts with
the optimal assumption.

Without loss generality, we denote the optimal solution for problem
(\ref{eq:opt444}) as $x_{1}=...=x_{d}=1$, $x_{d+1}=\text{\ensuremath{\alpha}}$
and $x_{n}=0$ for any $n>d+1$. Since $q_{n}^{*}=q$ and $M_{k}=M$,
then the ranking metric in the admission control algorithm becomes
$r_{n}=\frac{v_{n}}{q/M+F_{s_{n}^{*}}^{c}/B_{j}}$. This is equivalent
to using the ranking metric $r_{n}=\frac{v_{n}}{F_{s_{n}^{*}}^{c}}$
in the admission control algorithm. As a result, the admission control
algorithm can obtain the solution as $x_{1}=...=x_{d}=1$ and $x_{n}=0$
for any $n>d$. Let $W_{F}$ and $W_{G}$ denotes the values by the
two solutions $(1,...,1,\alpha,0,...,0)$ and $(1,...,1,0,0,...,0)$
above, and $W_{OPT}$ denotes the optimal solution to the original
problem (\ref{eq:opt333}). Then, we have
\[
W_{OPT}\le W_{F}\leq W_{G}+\alpha v_{d+1}\leq2W_{G}.
\]
 Thus, we have proved that $W_{G}\geq\frac{1}{2}W_{OPT}$.
\end{proof}

\rev{The key idea of the proof for Theorem (\ref{thm:AR2}) above is that we first formulate the special case admission control problem as an integer programming problem, and then show that the proposed approximate admission control algorithm can obtain the optimal solution for its relaxed fractional programming problem. By exploring the connection between the original integer programming problem and the relaxed fractional programming problem, we can finally show that in the worst case the approximate admission control algorithm can achieve approximation ratio of 1/2 for the original problem.}

For the general case that each base station may have heterogenous total communication resource and each mobile user may demand the different amount of communication resource, we can show the following result.
\begin{thm}\label{thm:AR}
For the general case, the admission control algorithm in JCC resource allocation can approximate
the optimal solution with a ratio of $\rho=\frac{A_{\max}+B_{\max}}{2A_{\max}B_{\max}},$
where $A_{\max}=\max_{k\in\mathcal{K}}\{M_{k}\}$ and $B_{\max}=\max_{l\in\mathcal{L},s\in\mathcal{S}}\{\frac{B_{l}}{F_{s}^{c}}\}$. \end{thm}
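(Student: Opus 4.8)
The plan is to reduce the general problem to a collection of independent single-cloud subproblems and then establish the ratio $\rho$ on each subproblem by an exchange/charging argument driven by the granularity of the resource occupancy. Since every base station is connected to exactly one mobile-edge cloud, both the objective $\sum_{n\in\mathcal{C}}v_n$ and the constraints (\ref{eq:WDP2})--(\ref{eq:WDP3}) separate across clouds; moreover the global ranking metric $\gamma_n$ in (\ref{eq:pr}) induces on each cloud's user set exactly the order the greedy rule would use if run on that cloud in isolation, and the acceptance test in Algorithm \ref{alg:Winner-Determination-Algorithm} for a user $n$ depends only on previously accepted users in cloud $l(n)$. Hence the global greedy solution is the union of its per-cloud restrictions, the optimum likewise decomposes, and it suffices to bound the ratio cloud by cloud. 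Because the formula for $\rho$ is decreasing in both $A_{\max}$ and $B_{\max}$, replacing these global maxima by the (smaller) per-cloud capacities $\max_{k\in\mathcal{K}_j}M_k$ and $\max_s B_j/F_s^c$ only increases the per-cloud guarantee, so a per-cloud ratio of $\rho$ implies the claimed global ratio after summation.

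The heart of the argument is a two-sided bound on the normalized occupancy. Writing $a_n=q_n^*/M_{k(n)}$ and $b_n=F_{s_n^*}^c/B_{l(n)}$, every servable user satisfies $a_n\le 1$ and $b_n\le 1$, so $\Phi_n=a_n+b_n\le 2$; and since $q_n^*\ge 1$ and $B_{l(n)}/F_{s_n^*}^c\le B_{\max}$, we also have $a_n\ge 1/A_{\max}$ and $b_n\ge 1/B_{\max}$, whence $\Phi_n\ge 1/A_{\max}+1/B_{\max}=2\rho$. I would then compare the greedy set $\mathcal{C}$ with the optimal set $\mathcal{C}^*$ by charging: each $n\in\mathcal{C}^*\cap\mathcal{C}$ is charged to itself, while for each $n\in\mathcal{C}^*\setminus\mathcal{C}$ greedy rejected $n$ because, at the moment it was considered, one of its two constraints (communication at $k(n)$ or computation at $l(n)$) was already saturated by accepted users of higher rank, i.e. with $\gamma_m\ge\gamma_n$. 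The single-blocker estimate $v_n=\gamma_n\Phi_n\le 2\gamma_n\le 2\gamma_m$ against $v_m=\gamma_m\Phi_m\ge 2\rho\,\gamma_m$ yields $v_n\le \frac{1}{\rho}v_m$, so one higher-ranked greedy user can absorb a rejected user's value at factor $1/\rho$. Aggregating the charges so that the total assigned to $\mathcal{C}$ stays within $\frac{1}{\rho}\sum_{m\in\mathcal{C}}v_m$ then gives $W^*\le \frac{1}{\rho}W$, i.e. $W\ge\rho W^*$.

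The main obstacle is making the charging globally consistent rather than merely per rejected user. Within one cloud the several communication constraints (one per base station) all share the single computation constraint, so a block of high-rank greedy users may be the saturating set for many rejected optimal users simultaneously, and naively charging $1/\rho$ per rejection would over-count. The resolution I would pursue is to charge against the saturated resource itself and to bound how many rejected users can stack on it using both occupancy lower bounds at once: since each user consumes at least $1/A_{\max}$ of any base station's communication and at least $1/B_{\max}$ of its cloud's computation, feasibility of $\mathcal{C}^*$ caps competing occupancy at one unit per resource, so at most $A_{\max}$ rejected users can be blocked at a given communication constraint and at most $B_{\max}$ at a given computation constraint. Reconciling these counts with the per-blocker factor $1/\rho$ — in particular the case where a rejected user's own demand nearly fills its blocking constraint, leaving little surviving greedy occupancy to charge against — is the delicate step, and it is precisely there that both $A_{\max}$ and $B_{\max}$ must enter to produce the constant $\frac{A_{\max}+B_{\max}}{2A_{\max}B_{\max}}$. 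A minimal two-item instance, in which a lowest-occupancy user with $\Phi=2\rho$ just outranks and blocks a highest-occupancy user with $\Phi=2$, shows the bound is tight, which I would use as a sanity check on the charging constants.
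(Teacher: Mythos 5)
Your setup is on the right track: the per-cloud decomposition is valid (though not needed), and the two-sided occupancy bound $2\rho=\tfrac{1}{A_{\max}}+\tfrac{1}{B_{\max}}\le\Phi_n\le 2$ is exactly the engine of the paper's argument. But the step you yourself flag as ``the delicate step'' is a genuine gap, not a technicality, and the route you sketch for closing it does not lead to the stated constant. Charging each rejected optimal user individually to a higher-ranked blocker at a factor $1/\rho$, and then trying to control how many rejected users can stack on one saturated resource (at most $A_{\max}$ or $B_{\max}$ of them), makes you pay the factor $2$ from $\Phi_i\le 2$ once \emph{per rejected user} rather than once \emph{per blocker}; multiplying that by the stacking count gives a bound that degrades with $A_{\max}$ and $B_{\max}$ in the wrong way and does not recover $\rho=\frac{A_{\max}+B_{\max}}{2A_{\max}B_{\max}}$.

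The paper closes the gap differently and more cheaply. Fix an accepted user $n\in\mathcal{C}$ and let $\Omega_n$ be the set of optimal users that $n$ blocks; every $i\in\Omega_n$ satisfies $\gamma_i\le\gamma_n$ and shares $n$'s base station and cloud. The key observation you are missing is that $\Omega_n\subseteq\mathcal{C}^*$ and $\mathcal{C}^*$ is itself feasible, so the \emph{total} occupancy of the blocked group at that base station and cloud obeys $\sum_{i\in\Omega_n}\Phi_i\le 1+1=2$: the bound $\le 2$ is paid once for the whole group, not once per member. Writing $\widetilde{\Omega}_n=\Omega_n\cup\{n\}$, one then gets $\sum_{i\in\widetilde{\Omega}_n}v_i=\sum_{i}\gamma_i\Phi_i\le\gamma_n\sum_{i}\Phi_i\le 2\gamma_n\le\frac{1}{\rho}\gamma_n\Phi_n=\frac{1}{\rho}v_n$, and since $\mathcal{C}^*\subseteq\bigcup_{n\in\mathcal{C}}\widetilde{\Omega}_n$, summing over $n\in\mathcal{C}$ yields $W^*\le W/\rho$. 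In short, the aggregation you were searching for is supplied by the feasibility of $\mathcal{C}^*$ restricted to the blocker's own base station and cloud; without that (or an equivalent) observation your charging argument does not terminate in the claimed ratio. Your tightness example with one user at $\Phi=2\rho$ blocking one at $\Phi=2$ is a correct sanity check and is consistent with this grouped bound.
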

\begin{proof}
Based on the ranking metric $\gamma_{n}$, we first define the set
of users $\Omega_{n}$ of a user $n$ such that if $i\in\Omega_{n}$
then $i\geq n$ and $i\in\mathcal{C}^{*}$ but $i\notin\mathcal{C}$
due to that user $n$ blocks user $i$ being accepted in the JCC resource allocation. In the following we also define that $\widetilde{\Omega}_{n}=\Omega_{n}\cup\{n\}$.
According to the admission control algorithm, we know that for
any user $i\in\widetilde{\Omega}_{n}$, we have $\gamma_{i}\leq\gamma_{n}$,
$k(n)=k(i)$ and $l(n)=l(i)$, which implies that
\[
v_{i}\leq v_{n}\frac{\Phi_{i}(q_{i}^{*},s_{i}^{*})}{\Phi_{n}(q_{n}^{*},s_{n}^{*})}=v_{n}\left(\frac{q_{i}^{*}}{M_{k(n)}}+\frac{F_{s_{i}^{*}}^{c}}{B_{l(n)}}\right)/\left(\frac{q_{n}^{*}}{M_{k(n)}}+\frac{F_{s_{n}^{*}}^{c}}{B_{l(n)}}\right).
\]
By summing over all the users $i\in\widetilde{\Omega}_{n}$, we have
\[
\sum_{i\in\widetilde{\Omega}_{n}}v_{i}=v_{n}/\left(\frac{q_{n}^{*}}{M_{k(n)}}+\frac{F_{s_{n}^{*}}^{c}}{B_{l(n)}}\right)\sum_{i\in\widetilde{\Omega}_{n}}\left(\frac{q_{i}^{*}}{M_{k(n)}}+\frac{F_{s_{i}^{*}}^{c}}{B_{l(n)}}\right).
\]
Since $\sum_{i\in\widetilde{\Omega}_{n}}\left(\frac{q_{i}^{*}}{M_{k(n)}}+\frac{F_{s_{i}^{*}}^{c}}{B_{l(n)}}\right)\leq2$
and $\frac{q_{n}^{*}}{M_{k(n)}}+\frac{F_{s_{n}^{*}}^{c}}{B_{l(n)}}\geq\frac{1}{A_{\max}}+\frac{1}{B_{\max}}$,
we then have
\[
v_{n}\geq\frac{A_{\max}+B_{\max}}{2A_{\max}B_{\max}}\sum_{i\in\widetilde{\Omega}_{n}}v_{i}.
\]
Furthermore, we know that $\mathcal{C}^{*}\subseteq\bigcup_{n\in\mathcal{C}}\widetilde{\Omega}_{n}$.
This implies that
\begin{eqnarray*}
\sum_{n\in\mathcal{C}}v_{n} & \geq & \frac{A_{\max}+B_{\max}}{2A_{\max}B_{\max}}\sum_{n\in\mathcal{C}}\sum_{i\in\widetilde{\Omega}_{n}}v_{i}\\
 & \geq & \frac{A_{\max}+B_{\max}}{2A_{\max}B_{\max}}\sum_{i\in\bigcup_{n\in\mathcal{C}}\widetilde{\Omega}_{n}}v_{i}\\
 & \geq & \frac{A_{\max}+B_{\max}}{2A_{\max}B_{\max}}\sum_{i\in\mathcal{C}^{*}}v_{i}.
\end{eqnarray*}
Thus, we have $W\geq\frac{A_{\max}+B_{\max}}{2A_{\max}B_{\max}}W^{*},$
which completes the proof.
\end{proof}

Theorem \ref{thm:AR} describes the performance guarantee of the admission control
algorithm in the worst case (which rarely happens in practice). Numerical results demonstrate that in practice the proposed algorithm is very efficient, with at most $14.3\%$ performance loss, compared
to the optimal solution.

\section{Truthful Pricing for JCC Resource Allocation}\label{pricing}

As discussed above, the valuation $v_{n}$ of a user $n$ measures the
importance or the utility that the user $n$ can achieve for completing
its task within the delay constraint. It plays a critical role in the admission control for JCC resource allocation. As a result, it would risk that some users may untruthfully report their valuations $v_{n}$ in order to increasing their chances in getting accepted in JCC resource allocation. These manipulations by the users
would lead to inefficient allocation of resources and system-wide performance
degradation. Thus, it is highly desirable to prevent such manipulations
by designing truthful pricing scheme such that all users have incentives to report their true valuations.

Formally, if a user $n$ pays a monetary cost $p_{n}$ for its demanded resources, the user receives the net payoff
of $v_{n}-p_{n}$. Then the truthful pricing problem for JCC resource allocation is defined as follows.

\begin{defn}
(Truthful Pricing Problem) For a user $n\in\mathcal{N}$, let $\varphi_{n}=<(q_{n}^{*},s_{n}^{*}),v_{n}>$
 and $\widetilde{\varphi}_{n}=<(q_{n}^{*},s_{n}^{*}),\widetilde{v}_{n}>$ denote the demand report profiles with the truthful valuation $v_{n}$ and untruthful valuation $\widetilde{v}_{n}$, respectively. We also denote the payments of user $i$ under the truthful demand report $\varphi_{n}$
and the untruthful report $\widetilde{\varphi}_{n}$ as $p_{n}(\varphi_{n})$
and $p_{n}(\widetilde{\varphi}_{n})$, respectively. Then net payoffs
of user $n$ for the truthful report and the untruthful report are $v_{n}-p_{n}(\varphi_{n})$
and $v_{n}-p_{n}(\widetilde{\varphi}_{n})$, respectively. The truthful
pricing problem is to design a payment scheme such that $v_{n}-p_{n}(\varphi_{n})\geq v_{n}-p_{n}(\widetilde{\varphi}_{n}),$
i.e., no user can benefit by reporting its valuation untruthfully.
\end{defn}

\subsection{Truthful Pricing Scheme}

We next consider the truthful pricing scheme design. Here we borrow the critical value approach in auction theory \cite{milgrom2004putting} for our design. Let $\lambda_{n}$ denote user $n$'s reported valuation. Intuitively, a critical
value $\theta_{n}$ of a user $n$ is the minimum value that user
$n$ should pay in order to be accepted in admission control for JCC resource allocation. That is, if $\lambda_{n}>\theta_{n}$,
then user $n$ is accepted; if $\lambda_{n}<\theta_{n}$, then user $n$
is rejected. And accordingly the truthful pricing scheme based the critical value
is that if a user $n$ is accepted, then user pay $\theta_{n}$;
otherwise, it pays $0$.

The key challenge of the critical value approach is determining the critical value. For our problem, we show that determining the critical
value $\theta_{n}$ of a user $n$ is equivalent to identifying its critical user, whose demanding profile is the first
profile following user $n$ that has been rejected in admission control but would have been
granted if user $n$ is absent in the JCC resource allocation. Note that if a user
does not have a critical user, it has a critical value of zero.

To proceed, in order to find the critical user of an accepting user $n$, we first
assume user $n$'s demanding profile is not in the ranked user list
by the admission control algorithm. Then we identify the first
user $i$ (following user $n$ in the ranked user list) in the remaining
set that if that user is selected as an accepted user (i.e., satisfy the
resource constraints) it would make some resource constraint violated
when we also add user $n$ in the accepted list. Thus, this user $i$
is indeed the critical user of user $n$, since user $i$ will not
be accepted in admission control when user $n$ has already been accepted. Accordingly, we can then compute the critical value of
the user $n$ as
\begin{equation}
\theta_{n}=\frac{\lambda_{i}\Phi_{n}(q_{n}^{*},s_{n}^{*})}{\Phi_{i}(q_{i}^{*},s_{i}^{*})}.\label{eq:payment}
\end{equation}

The procedure for obtaining the truthful pricing of JCC resource auction based on critical value is summarized
in Algorithm \ref{alg:Winner-Determination-Algorithm-1}. In the following we will show that the proposed payment scheme will guarantee that the proposed pricing scheme is indeed truthful.


\begin{algorithm}[tt]

\textbf{initialize} $p_{1}=...=p_{N}=0$

\For{$n=1,2,...,N$, sorted in the decreasing order of $\gamma_{n}$}{
    \If{$x_{n}=1$}{
        \textbf{compute} $\beta_{1}=\sum_{j<n:k(j)=k(n)}q_{j}^{*}x_{j}$ and $\beta_{2}=\sum_{j<n:l(j)=l(n)}F_{s_{j}^{*}}^{c}x_{j}$

        \For{$i=n+1,...,N$}{
            \If{$k(i)=k(n)$}{
               \If{$\beta_{1}+q_{i}^{*}\leq M_{k(n)}$ and $\beta_{2}+F_{s_{i}^{*}}^{c}\leq B_{l(n)}$}{
                \textbf{compute} $\beta_{1}=\beta_{1}+q_{i}^{*}$ and $\beta_{2}=\beta_{2}+F_{s_{i}^{*}}^{c}$

                \If{either $\beta_{1}+q_{n}^{*}>M_{k(n)}$ or $\beta_{2}+F_{s_{n}^{*}}^{c}>B_{l(n)}$}{
                 \textbf{set} $p_{n}=\theta_{n}=\frac{\lambda_{i}\Phi_{n}(q_{n}^{*},s_{n}^{*})}{\Phi_{i}(q_{i}^{*},s_{i}^{*})}$

                 \textbf{break}
                }
                }
           }
        }
    }
}
\textbf{return} $\{p_{n}\}_{n\in\mathcal{N}}$
\caption{\label{alg:Winner-Determination-Algorithm-1} Pricing Scheme for JCC Resource Allocation}
\end{algorithm}

%
%
%
%
%
%
%
%
%
%
%
%
%

%
%

\subsection{Truthfulness Analysis}

We first show that the proposed pricing scheme for JCC resource allocation is truthful. According to
the critical value approach for auction mechanism design \cite{milgrom2004putting}, a pricing scheme is truthful if the following conditions hold: (1) the winner determination
algorithm (i.e., the admission control algorithm for our problem) is monotone; (2) each winning user (i.e., each accepted user for our problem) is paid its critical
value. To proceed, we formally define these two conditions.
\begin{defn}
(Monotonicity) An admission control algorithm is monotone, if for
a user $n$ when a report $\varphi_{n}=<(q_{n}^{*},s_{n}^{*}),\lambda_{n}>$
is accepted, then given the reports of the other users $\varphi_{-n}$
are fixed, the report $\widetilde{\varphi}_{n}=<(\widetilde{q}_{n},\widetilde{s}_{n}),\widetilde{\lambda}_{n}>$
satisfying $\widetilde{q}_{n}\geq q_{n}^{*}$, $\widetilde{s}_{n}\geq s_{n}^{*}$
and $\widetilde{\lambda}_{n}\geq\lambda_{n}$ by user $n$ would also be
accepted.
\end{defn}
Intuitively, an admission control  algorithm is monotone if, increasing
the valuation and/or decreasing the size of the desired communication
and computation resources in the bid will not cause an accepted user
to lose given the reports of the other users are fixed.
\begin{defn}
(Critical Value) For a user $n$ in the accepted user set, given the reports of the other users
$\varphi_{-n}$ are fixed, there is a critical value $\theta_{n}$
such that if the user $n$ declares a valuation $\lambda_{n}>\theta_{n}$
in its report, then it must be accepted; if the user $n$ declares a valuation
$\lambda_{n}<\theta_{n}$, then it will be rejected.
\end{defn}
We then prove that the proposed pricing scheme is truthful by showing that it satisfies
the two conditions above.
\begin{thm}
The proposed pricing scheme in Algorithm \ref{alg:Winner-Determination-Algorithm-1} for JCC resource allocation is truthful.\end{thm}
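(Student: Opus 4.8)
The plan is to invoke the critical-value characterization of truthfulness recalled just above the theorem: it suffices to verify (i) that the admission control algorithm (Algorithm~\ref{alg:Winner-Determination-Algorithm}) is monotone, and (ii) that each accepted user is charged exactly its critical value, with that value being independent of the user's own reported valuation. Since the truthful-pricing definition fixes the resource profile $(q_n^{*},s_n^{*})$ and varies only the reported valuation $\lambda_n$, I would first reduce the problem to the single-parameter setting in $\lambda_n$, where monotonicity together with the critical-value payment is known to imply truthfulness.

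First I would establish monotonicity. Fix the reports $\varphi_{-n}$ of all other users and order them by decreasing $\gamma$; user $n$ is inserted at the position determined by $\gamma_n=\lambda_n/\Phi_n(q_n^{*},s_n^{*})$. Raising $\lambda_n$ (or, for the full multidimensional version, shrinking the demand $(q_n,s_n)$, which lowers $\Phi_n$) only increases $\gamma_n$, so $n$ moves weakly earlier in the sorted list. The key observation is that the greedy accept/reject decisions for users ranked strictly above $n$ do not depend on $n$ at all, so as $n$ moves up the set of users processed before it shrinks to a prefix; hence the set of already-accepted users occupying $n$'s base station and cloud when $n$ is reached can only shrink. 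Combined with the fact that $n$'s own footprint does not grow, the capacity check that $n$ passed originally is still satisfied, so $n$ remains accepted. This yields monotonicity.

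Next I would verify the critical value. Monotonicity implies that the set of reported valuations for which $n$ is accepted is an upward-closed half-line, so a threshold $\theta_n$ exists with $n$ accepted iff $\lambda_n>\theta_n$. I would then show that the critical user $i$ identified in Algorithm~\ref{alg:Winner-Determination-Algorithm-1} pins this threshold down exactly: by construction $i$ is the first user following $n$ that, once admitted in $n$'s absence, exhausts the residual capacity at $n$'s base station or cloud. Tracking the greedy allocation as $\gamma_n$ crosses $\gamma_i$, I would argue that for $\gamma_n>\gamma_i$ user $n$ is ranked above $i$ and still fits in the residual capacity left by the higher-ranked accepted users, hence is accepted, whereas for $\gamma_n<\gamma_i$ user $i$ is admitted first and blocks $n$. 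Converting the crossover condition $\gamma_n=\gamma_i$, i.e.\ $\lambda_n/\Phi_n=\lambda_i/\Phi_i$, into a valuation threshold yields precisely $\theta_n=\lambda_i\Phi_n(q_n^{*},s_n^{*})/\Phi_i(q_i^{*},s_i^{*})$ as in (\ref{eq:payment}); crucially this expression involves only $\lambda_i$, $\Phi_i$ and the fixed $\Phi_n$, so it does not depend on $\lambda_n$.

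Finally I would close with the standard case analysis on the true valuation $v_n$ versus $\theta_n$: if $v_n>\theta_n$ the user is accepted and pays $\theta_n$ under any report that keeps it accepted, earning $v_n-\theta_n\ge 0$, while any report causing rejection earns $0$; if $v_n<\theta_n$ truthful reporting yields payoff $0$, whereas any report securing acceptance forces payment $\theta_n>v_n$ and a negative payoff. In both cases truthful reporting is a best response, so no user can gain by misreporting. The main obstacle I anticipate is the critical-value step: rigorously showing that the single critical user returned by Algorithm~\ref{alg:Winner-Determination-Algorithm-1} determines the exact acceptance threshold requires carefully handling the users ranked between $n$ and $i$ and confirming that $n$'s feasibility flips precisely as $\gamma_n$ passes $\gamma_i$; the monotonicity argument, by contrast, is comparatively routine once the prefix observation is in place.
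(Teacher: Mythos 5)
Your proposal follows essentially the same route as the paper's proof: verify monotonicity of the greedy admission control via the ranking metric $\gamma_n=\lambda_n/\Phi_n(q_n^{*},s_n^{*})$, show that the critical user $i$ yields the acceptance threshold $\theta_n=\lambda_i\Phi_n(q_n^{*},s_n^{*})/\Phi_i(q_i^{*},s_i^{*})$, and invoke the critical-value characterization of truthfulness. If anything, your write-up is slightly more careful than the paper's at the two points you flag (the prefix argument for monotonicity and the final payoff case analysis, which the paper leaves implicit), but the decomposition and key ideas are identical.
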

\begin{proof}
We first check the condition of monotonicity. Recall that we use the
performance metric $\gamma_{n}=\frac{\lambda_{n}}{\Phi_{n}(q_{n}^{*},s_{n}^{*})}$
to rank the users in the admission control algorithm. Assume that
a user $n$ with a report $\varphi_{n}=<(q_{n}^{*},s_{n}^{*}),\lambda_{n}>$
is an accepted user. Then given the reports of the other users $\varphi_{-n}$
are fixed, for the user $n$ with the report $\widetilde{\varphi}_{n}=<(\widetilde{q}_{n},\widetilde{s}_{n}),\widetilde{\lambda}_{n}>$
satisfying $\widetilde{q}_{n}\geq q_{n}^{*}$, $\widetilde{s}_{n}\geq s_{n}^{*}$
and $\widetilde{\lambda}_{n}\geq\lambda_{n}$, we know that $\widetilde{\gamma}_{n}=\frac{\widetilde{\lambda}_{n}}{\Phi_{n}(\widetilde{q}_{n},\widetilde{s}_{n})}\geq\gamma_{n}=\frac{\lambda_{n}}{\Phi_{n}(q_{n}^{*},s_{n}^{*})}$.
Then the report $\widetilde{\varphi}_{n}$ must also be an accepted report
according to the admission control algorithm. Thus, the condition
of monotonicity is satisfied.

We then check the condition of critical value. Recall that the critical
value of an accepted user $n$ is given as $\theta_{n}=\frac{\lambda_{i}\Phi_{n}(q_{n}^{*},s_{n}^{*})}{\Phi_{i}(q_{i}^{*},s_{i}^{*})}$
where user $i$ is the critical user of user $n$, whose report is the
first report following user $n$ that has been rejected but would have
been granted if user $n$ is absent in accepted user set. Thus, if user
$n$ declares a valuation $\lambda_{n}>\theta_{n}$ in its report, then user $n$ has a higher priority than the critical user $i$ in the admission control since $\gamma_{n}=\frac{\lambda_{n}}{\Phi_{n}(q_{n}^{*},s_{n}^{*})}>\gamma_{i}=\frac{\lambda_{i}}{\Phi_{i}(q_{i}^{*},s_{i}^{*})}$.
In this case the set of accepted users does not change and user $n$ must still be accepted.
If user declares a valuation $\lambda_{n}<\theta_{n}$ in its report,
then we have $\gamma_{n}<\gamma_{i}$, i.e., user $i$ has a lower
priority than the critical user $i$ in the admission control. In this case, according to Algorithm \ref{alg:Winner-Determination-Algorithm-1} for determining the critical user, the user $i$ will be accepted, and if we further add user $n$ in the accepted user set the resource capacity constraint
is violated. As a result, user $n$ can not be accepted. Thus, the condition of critical value also holds. Combining the arguments above, we hence know that the proposed pricing scheme for JCC resource allocation is truthful.
\end{proof}

Hence, by adopting the truthful pricing scheme in Algorithm \ref{alg:Winner-Determination-Algorithm-1}, all users have incentives to report their true valuations to the network operator. This will be very useful to prevent manipulations by untruthful reports and guarantee the system efficiency for mobile-edge cloud computing.

%

\section{Numerical Results}\label{sec:Numerical-Results}

\begin{figure}[tt]
\centering
\includegraphics[scale=0.45]{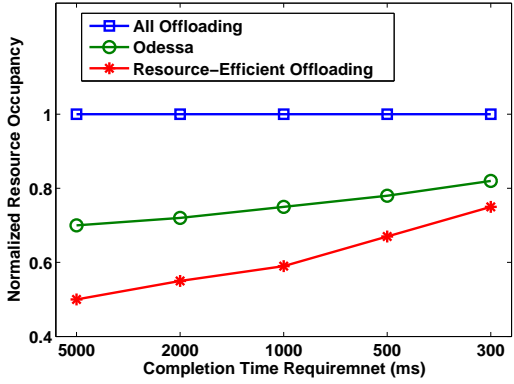}
\caption{\label{fig:Ocuupancy} Resource occupancy by different computation offloading schemes}
\end{figure}

\begin{figure}[tt]
\centering
\includegraphics[scale=0.45]{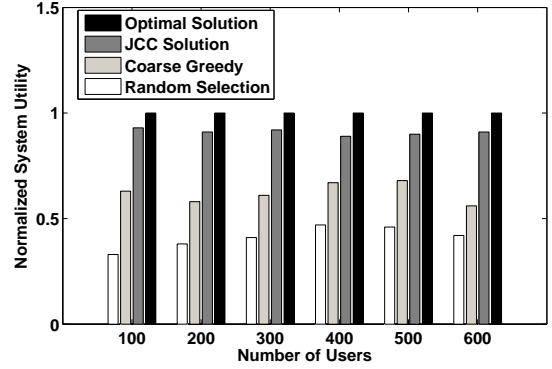}
\caption{\label{fig:UserCost} System performance with different number of users}
\end{figure}


\begin{figure}[tt]
\centering
\includegraphics[scale=0.45]{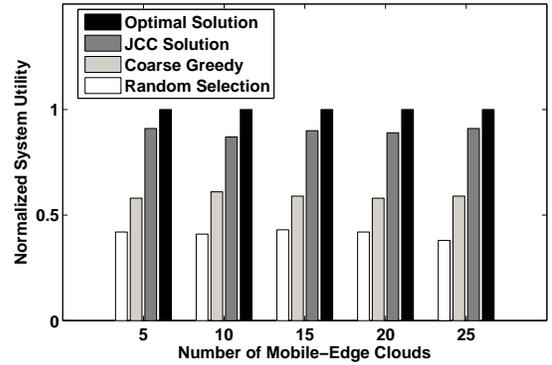}
\caption{\label{fig:Cloud}System performance with different number of mobile-edge clouds}
\end{figure}

\begin{figure}[tt]
\centering
\includegraphics[scale=0.45]{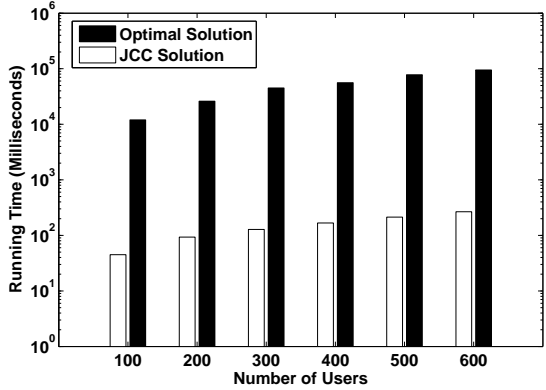}
\caption{\label{fig:Time}Running time of two comparing schemes}
\end{figure}



In this section, we evaluate the performance of the proposed on-demand JCC resource allocation mechanisms  through numerical studies.  We consider that there are $N=400$ randomly-scattered  mobile users requesting mobile-edge cloud service. There are multiple wireless base-stations, each of which has $M_{k}=15$ subchannels (each with a bandwidth $w=1$ Mhz) for supporting on-demand mobile-edge cloud service. The transmission power of a mobile user is $\eta_n=100$ mW and the background noise $\varpi_n=-100$ dBm \cite{saraydar2002efficient}. According to the physical interference model, we set the channel gain $h_{n,k(n)}=d^{-\alpha}_{n,k(n)}$ where $d_{n,k(n)}$ is the distance between user $n$ and its associated base-station $k(n)$ and $\alpha=4$ is the path loss factor \cite{saraydar2002efficient,li2018achievable}.

For the computation task, we consider two types of applications -- face recognition \cite{soyata2012cloud} and high-resolution QR-code recognition \cite{yang2013framework}, and use the task graphs therein. The task completion deadline of each user is randomly determined from the set $\{300, 500, 1000, 2000, 5000\}$ ms, and the computation capability of a user is randomly assigned from the set $\{0.5, 0.8, 1.0\}$ GHz \cite{wen2012energy}. For the cloud computing, there are multiple mobile-edge clouds, and each of which offers three types of virtual machines with the computation capabilities of $5, 10, 20$ GHz, respectively \cite{zhou2016energy,li2018dynamic}. The total computation capability of each mobile-edge cloud server is randomly assigned from the set $\{50, 100, 200\}$ GHz \cite{wen2012energy}. To reduce the connection latency, we assume that each base-station is connected to the closest mobile-edge cloud and each user is associated with the closest base-station. The truthful valuation $v_{n}$ of each user for purchasing the mobile-edge cloud service to complete the task is randomly assigned from the set $\{1, 2,...,20\}$ dollars.

We first evaluate the performance of our proposed resource-efficient offloading algorithm (i.e., delay-aware task graph partition in Algorithm \ref{alg:Delay-Optimal-Task-Graph} in terms of the resource occupancy. As the benchmark, we also implement the following computation offloading schemes: 1) all offloading solution -- all task components in the task graph will be offloaded; 2) Odessa scheme \cite{ra2011odessa} -- a task component is decided to be executed locally or offloaded remotely in a greedy manner. For each offloading scheme we find the minimum resource occupancy $\Phi_{n}(q_{n},s_{n})$ that satisfies the given completion time requirement. The results are shown in Figure \ref{fig:Ocuupancy} using face recognition \cite{soyata2012cloud} as a study case, wherein we use the all offloading solution as the baseline to normalize the resource occupancy performance by the other two schemes. We see that the proposed delay-aware task graph partition in Algorithm \ref{alg:Delay-Optimal-Task-Graph} can reduce up-to 50\% resource occupancy over the all offloading scheme. Compared to Odessa scheme, our algorithm can achieve up-to 22\% resource occupancy reduction. This demonstrates the proposed delay-aware task graph partition algorithm has superior resource efficiency.

We then evaluate the proposed approximate admission control in JCC resource allocation. As the benchmark, we consider the following solutions: (1) Optimal solution: we compute the (near) optimal solution of the admission control problem using Cross Entropy method, which is an advanced randomized searching technique and has been shown to be efficient in finding near-optimal solutions to complex combinatorial optimization problems \cite{rubinstein2004cross}; (2) Coarse greedy solution: instead of using the ranking criterion in (\ref{eq:pr}), we compute the greedy solution according to the coarse ranking criterion of $\lambda_{n}$, i.e., the claimed valuation in the bid; (3) Random selection: we sequentially and randomly select a new user to add into the accepted user set until the capacity constraints are violated.

In the following, we use the optimal solution by Cross Entropy method as the baseline to normalize the system utility (i.e., social welfare) of other solutions. We implement the simulations with different number of users, and mobile-edge clouds in Figures \ref{fig:UserCost} and \ref{fig:Cloud}, respectively.  In both cases, we observe that the proposed JCC solution achieves at least $34.7\%$ and $88.3\%$ performance improvement over the solutions of coarse greedy and random selection, respectively. Compared with the optimal solution, we see that the performance loss of the  JCC solution is at most $14.3\%$. This demonstrates of the efficiency of the proposed JCC solution.

We further investigate the running time of the proposed admission control algorithm.  We measure the running time in the computing environment of a 64-bit Windows PC with 2.5GHz Quad core CPU and 16GB memory. The results are shown in Figure \ref{fig:Time}. We see that the JCC mechanism is computationally efficient (with a running time of several milliseconds), with a factor of up to 1000 speed-up compared with the optimal solution by Cross Entropy method. This also demonstrates the proposed admission control algorithm is useful for practical implementation.

\begin{figure}[tt]
\centering
\includegraphics[scale=0.45]{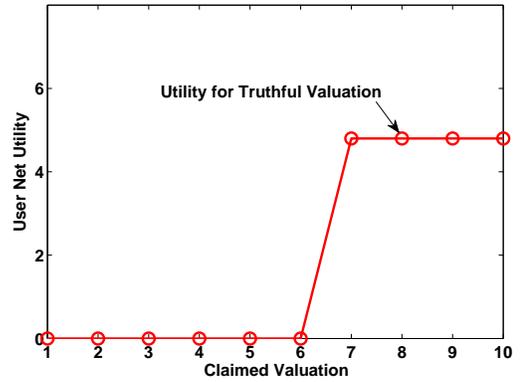}
\caption{\label{fig:Bid1}User net utility with different claimed valuations (truthful valuation $v_n$=8)}
\end{figure}

\begin{figure}[tt]
\centering
\includegraphics[scale=0.45]{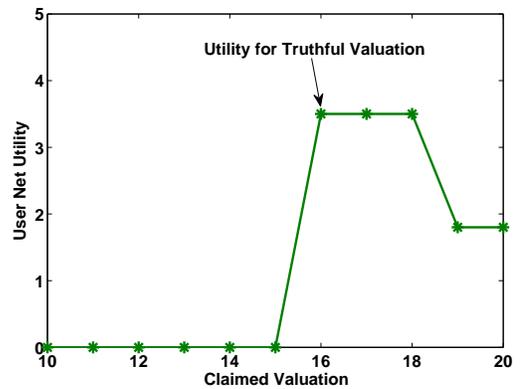}
\caption{\label{fig:Bid2}User net utility with different claimed valuations (truthful valuation $v_n$=16)}
\end{figure}

We next evaluate the truthfulness of the proposed pricing scheme for JCC resource allocation. We randomly select two users from the system and allow them to claim different valuations in their reports. We show the obtained user utility for these two users in Figures \ref{fig:Bid1} and \ref{fig:Bid2}, respectively. We see that by claimed valuations that are different from the truthful valuations, no users can improve their net utility (i.e., true valuation minus payment), under some cases the utility would be worsen. This demonstrates the proposed pricing scheme for JCC resource allocation is truthful such that no user has the incentive to lie about its valuation.

\section{Related Work}\label{sec:Related-Work}
Many previous works in mobile cloud computing have focused on the computation offloading problem (e.g., \cite{wen2012energy,yang2013framework,ChenToN2015,pu2016d2d,chen2018socially} and references therein).  Wen \emph{et al.} in \cite{wen2012energy} presented an efficient offloading policy by jointly configuring the clock frequency in the mobile device and scheduling the data transmission to minimize the energy consumption. Yang \emph{et al.} in \cite{yang2013framework} studied the scenario that multiple users share the wireless network bandwidth, and solved the problem of minimizing the delay performance by a centralized heuristic genetic algorithm. \rev{Chen \emph{et al.} in \cite{ChenToN2015,chen2015decentralized} proposed a game theoretic approach for designing decentralized multi-user computation offloading mechanism. Note that the computation offloading scheme in our paper is different from the studies above, since the existing works aim at optimizing user's energy efficiency or delay performance, while our objective is to optimize the user's offloading decision in terms of resource utilization efficiency. Moreover, existing works above do not consider the JCC resource allocation. Sardellitti \emph{et al.} in \cite{barbarossa2013joint} proposed an optimization framework for the energy-efficient JCC resource allocation, by assuming the network operator knows the complete information of all users' applications. Li \emph{et al.} in \cite{li2018learning} studied the computation offloading issue for deep learning applications. Tao \emph{et al.} in \cite{tao2017performance} investigated the performance guaranteed computation offloading for mobile-edge cloud computing. Chen \emph{et al.} in \cite{chen2017exploiting}  considered to exploit the D2D communications to assist the mobile edge computing. Along a different line, in this paper we introduce a novel paradigm of on-demand mobile-edge cloud service in order to efficiently allocate the JCC resources to those users value them most. Moreover, in our setting each user can locally decide the communication and computation resource profile for computation offloading without the need of reporting the application information to the network operator. This is very useful for reducing the information exchange overhead and protecting user's privacy without exposing the sensitive user specific application information.}

There are existing studies in designing truthful pricing scheme for  wireless resource auction (e.g., \cite{zhou2008ebay,gao2011map,dong2014double})  and cloud resource auction (e.g.,\cite{liu2017context,nisan2007algorithmic,zhang2013framework}), which consider either network communication or cloud computation resource allocation in a separate manner. While, in this paper we investigate the joint communication and computation resource allocation for mobile-edge cloud computing -- the nexus between cloud computing and wireless networking. Moreover, the truthful pricing scheme in our study builds upon the approximate admission control algorithm using the ranking metric in (\ref{eq:pr}) regarding to the resource occupancy function (\ref{ROF}), which is due to user's decision in the computation offloading. For the performance analysis we derive the approximation ratio for both the special and general cases. The analysis and the approximation ratios are completely different from the auction winner selection schemes in existing studies (e.g., \cite{nisan2007algorithmic,zhou2008ebay,gao2011map,dong2014double,zhang2013framework}).


\section{Conclusion}\label{sec:Conclusion}
Aiming at  provisioning flexible mobile-edge cloud service, in this paper we proposed a comprehensive framework consisting of a resource-efficient computation offloading mechanism for the users and a joint communication and computation (JCC) resource allocation mechanism for the network operator. We first solved the resource-efficient computation offloading problem for each individual user, and derive the optimal communication and computation resource demanding profile that minimizes the resource occupancy and meanwhile satisfies the delay constraint. We tackled the admission control problem for JCC resource allocation, and developed an efficient approximation solution of a low complexity. We also addressed the truthful pricing problem by resorting to the powerful tool of critical value approach. Extensive performance evaluation demonstrates that the proposed mechanisms can achieve superior performance for on-demand mobile-edge cloud computing.

For the future work, we are going to consider the more general case that users may dynamically depart and leave the mobile-edge cloud system. We will take into account users' mobility patterns and devise efficient online resource allocation algorithms to cope with such system dynamics.

\bibliographystyle{ieeetran}
\bibliography{MobileCloud}

\end{document}